\theoremstyle{plain}
\theoremstyle{remark}
\newtheorem{theorem}{\indent Theorem}
\newtheorem{corollary}{\indent Corollary}
\newtheorem{lemma}{\indent Lemma}
\begin{document}


\title{Sequentially witnessing entanglement by independent observer pairs}

\author{Mao-Sheng Li}
\email{li.maosheng.math@gmail.com}
\affiliation{ School of Mathematics,
	South China University of Technology, Guangzhou
	510641,  China}

	\author{Yan-Ling Wang}
\email{wangylmath@yahoo.com}
\affiliation{ School of Computer Science and Technology, Dongguan University of Technology, Dongguan, 523808, China}
\date{\today}

\begin{abstract}
	This study investigates measurement strategies in a scenario where multiple pairs of Alices and Bobs independently and sequentially observe entangled states. The aim is to maximize the number of observer pairs   $(A_k,B_l)$ that can witness entanglement. Prior research has demonstrated that arbitrary pairs $(A_k, B_k)$ ($k\leq n$)  can observe entanglement in all pure entangled states and a specific class of mixed entangled states [\href{https://doi.org/10.1103/PhysRevA.106.032419}{Phys. Rev. A \textbf{106} 032419 (2022)}]. However, it should be noted that other pairs $(A_k, B_l)$ with $(k\neq l \leq n)$ may not observe entanglement using the same strategy. Moreover, a novel strategy is presented, enabling every pair of arbitrarily many Alices and Bobs to witness entanglement   regardless of the initial state being a Bell state or a particular class of mixed entangled states.    These findings contribute to understanding measurement strategies for maximizing entanglement observation in various contexts.
\end{abstract}

\maketitle
\section{Introduction}
Quantum entanglement \cite{QE_HHHH09,QE_GT09} is of paramount importance in the field of quantum mechanics and finds numerous applications in   various quantum information processing tasks, including quantum cryptography \cite{QC_BB84,QC_Ekert91,QC_GRTZ02}, quantum key distribution\cite{QK_MY98,QK_BHK05,QK_ABGMPS07}, quantum metrology and sensing \cite{QM_PS09,QM_GLM11}, quantum simulation \cite{QS_GAN14},  and quantum computing \cite{QC_Preskill18}. Consequently, investigating efficient utilization of this entanglement resource is of great importance. One intriguing area of research in this field focuses on the recycling of entanglement that is shared among sequential observers, examining how effectively it can be reused.

 Bell nonlocality, as a powerful manifestation of entanglement, has attracted significant attention. In the scenario introduced by Silva et al. \cite{Silva15}, Alice and Bob share an entangled state. Bob measures his qubit and then passes it to a second Bob, who performs another measurement, and so on. The objective is to maximize the number of Bobs who can exhibit an expected violation of the Clauser-Horne-Shimony-Holt (CHSH) Bell inequality with a single Alice. 
 For bipartite quantum systems, it was shown in \cite{Silva15, Mal16} that, when each Bob performs sharp measurements with equal probability, at most two Bobs can achieve expected CHSH violations with a single Alice. However, by employing unsharp measurements, Brown and Colbeck \cite{Brown20} demonstrated that an arbitrarily large number of independent Bobs can share Bell nonlocality of the Bell state with a single Alice. Subsequently, these findings have been extended to higher dimensional bipartite systems \cite{Zhang21, Zhang22} and scenarios involving multiple Alices and Bobs \cite{Cheng21,Cheng22}. It has also been conjectured that,  in general, at most pair of Alices and Bobs can detect Bell-nonlocal correlations  \cite{Cheng21}. Furthermore, several studies have explored the recyclability of nonlocal correlations \cite{Mal16, Das19, Saha19, Maity20, Srivastava20,   Cabello21, Ren22, Zhu22, Das22, Hu22, Xi23}. These investigations contribute to our understanding of how nonlocal correlations can be effectively utilized in various settings, shedding light on the potential applications and implications of Bell nonlocality.

 Apart from studying entanglement through the violation of Bell inequalities, the use of entanglement witnesses (EWs) has proven to be valuable in the analysis and characterization of entanglement \cite{QE_HHHH09,QE_GT09,EW_HHH96,EW_T00}. An entanglement witness operator, denoted as  
 $W$, is an Hermitian operator that satisfies the following two conditions (1) $\langle W \rangle_{\rho_s} \geq 0$ for all separable states $\rho_s,$  (2)  there exists an entangled state $\rho_e$ such that $\langle W \rangle_{\rho_e} < 0$.  Here we use the notation $\langle W \rangle_\rho=\mathrm{Tr}[W\rho]$.

 Several works \cite{Bera18, Srivastava22,  Pandit22} have investigated whether entanglement can be sequentially witnessed by an arbitrary number of observers. In contrast to the case of Bell nonlocality, Pandit et al. \cite{Pandit22} demonstrated that entanglement can be detected and recycled an arbitrary number of times. Specifically, they considered a scenario where a bipartite entangled state is initially shared between the first Alice-Bob pair, denoted as $(A_1, B_1)$. To witness entanglement, the first pair, along with subsequent Alice-Bob pairs, performs local measurements on their respective qubits and passes them to the next pair $(A_2, B_2)$, and so on. The objective is to maximize the number of pairs that can independently witness entanglement. They found that an arbitrarily long sequence of Alice-Bob pairs can successfully detect entanglement sequentially. Recent works have extended similar findings to multipartite systems \cite{Srivastava22b,Srivastava22c}. However, it is important to note that the successful witnessing of entanglement by a pair $(A_k, B_k)$ does not imply the witness of entanglement for every pair $(A_i, B_j)$ with $i, j \leq k$. There are some pairs, as independent observers, may not witness entanglement themselves. Therefore, it is intriguing to explore whether entanglement can be detected by every pair of arbitrarily many Alices and Bobs. In this study, we will provide an affirmative answer to this problem.

The paper is structured as follows. In Section~\ref{sect:measurement}, we introduce the scenario under investigation and present the measurement strategies and entanglement witness employed in this study.
In Section~\ref{sect:main}, we demonstrate that every pair of arbitrarily many Alices and Bobs can sequentially witness the entanglement of any pure entangled states and a specific class of mixed entangled states. Finally, we provide a conclusion in Section~\ref{sect:conclusion}.

\section{Scenario,   measurement strategy and entanglement Witness }\label{sect:measurement}
Throughout this paper,  we denote $\{|0\rangle, |1\rangle\}$ as the computational basis of a qubit system. Under the computational basis, the  Pauli operators as $\sigma_i, i\in\{1,2,3\}$ are defined as follows
 $$
  \sigma_1=\left[
 \begin{array}{cc}
 	0 & 1 \\
 	1 &0 \\
 \end{array}
 \right],\ \ 
 \sigma_2=\left[
 \begin{array}{cc}
 	0& \mathrm{i} \\
 	-\mathrm{i} &0 \\
 \end{array}
 \right],\ \ 
 \sigma_3=\left[
 \begin{array}{cc}
 	1 & 0 \\
 	0 &-1 \\
 \end{array}
 \right],
 $$
  where $\mathrm{i}$ is the square root of  unit. And we will use the maximally entangled state $$|\Psi\rangle =\frac{1}{\sqrt{2}} (|01\rangle+|10\rangle)$$ 
  several times. Now we introduce the scenario that will consider and our measurement strategy.

\begin{figure}[t]
	\centering
	\includegraphics[width=1\columnwidth]{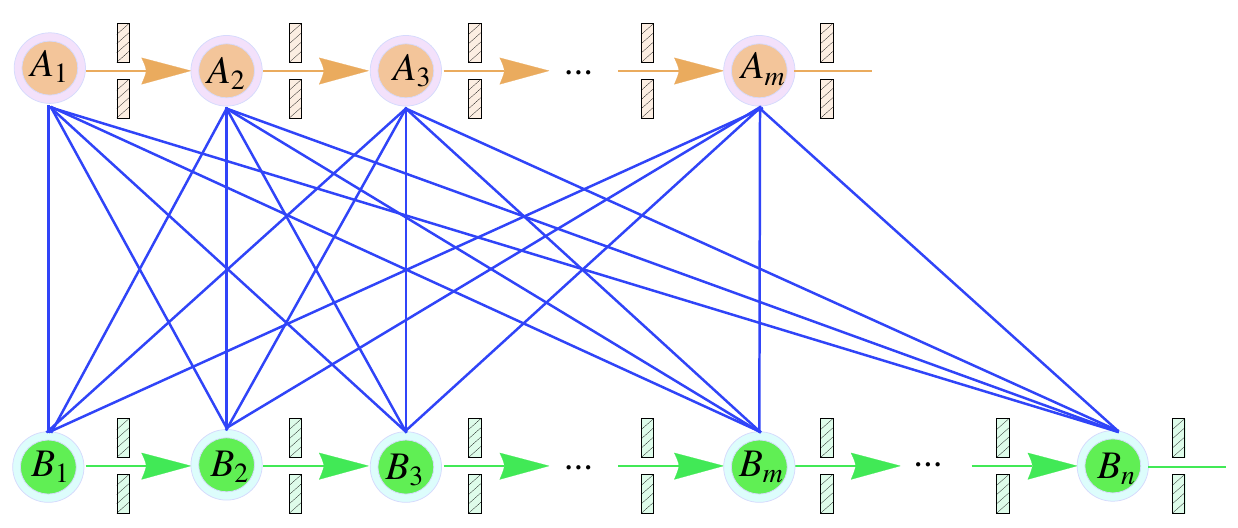}
	\label{plot}
	\caption{($(m,n)$-sequential scenario) Sequentially witnessing entanglement by multiple independent Alices and Bobs.  Each pair $(A_k,B_l)$ ($1\leq k\leq m, 1\leq l\leq n$) of Alices and Bobs are intended to detect the entanglement.}
	\label{fig.bipartitemn}
\end{figure}

 {\bf $(m,n)$-sequential scenario:} We investigate a generalized version of the sequential scenario described in \cite{Silva15}. Consider a scenario where there are $m$ sequential Alices and $n$ sequential Bobs, with an initial entangled state shared between the first Alice and Bob. The first Alice performs local measurements randomly on her qubit and then passes it to the next Alice, who performs subsequent measurements, and so on (see Fig.~\ref{fig.bipartitemn}). The same sequential process is followed by the Bobs.

The objective is to determine measurement strategies for both Alices and Bobs, such that each pair of Alices and Bobs can witness the entanglement. The ultimate goal is to maximize both $m$ and $n$ in order to successfully accomplish the aforementioned task. We will prove that by employing the weak measurement strategy presented below, one can achieve this task for arbitrarily long sequences of Alices and Bobs, regardless of whether the initial state is a pure entangled state or a specific class of mixed entangled states.

  {\bf Measurement strategy:}~  All the measurements we use are two-outcome positive operator-valued measurements (POVM) $\{M^{(0)},M^{(1)}\}$ where $M^{(i)}\geq 0$ and $M^{(0)}+M^{(1)}=\mathbb{I}_2$. For each $ P\in \{\sigma_1,\sigma_2,\sigma_3\}$ and $\lambda\in[ 0,1]$, we can define  a POVM $\{P^{(0)}(\lambda),P^{(1)}(\lambda)\}$ where
$$P^{(0)}(\lambda)=\frac{\mathbb{I}_2+\lambda P}{2},\ \ P^{(1)}(\lambda)=\frac{\mathbb{I}_2-\lambda P}{2}.$$ 
Suppose that each Alices and Bobs performs three types of POVMs $\{\sigma_i^{(0)}(\lambda),\sigma_i^{(1)}(\lambda)\}_{i=1}^3$  and $\{\sigma_j^{(0)}(\gamma),\sigma_j^{(1)}(\gamma)\}_{j=1}^3$   respectively with equal  probability. The $k$-th parameter of Alice is denoted as $\lambda^{(i)}_k$, and the $l$-th parameter of Bob is denoted as $\gamma^{(j)}_l$, where $i,j=1,2,3$. Moreover, throughout this paper, we assume that \begin{equation}\label{eq:lambda}
	\begin{array}{c}
		\lambda^{(3)}_k=1~~{\rm and}~~\lambda^{(1)}_k=\lambda^{(2)}_k=\lambda_k,\\[2mm]
		\gamma^{(3)}_l=1~~{\rm and}~~\gamma^{(1)}_l=\gamma^{(2)}_l=\gamma_l.
	\end{array}
\end{equation}
  Let $\rho_{A_k,B_l}$ denote the state  held by the $(A_k,B_l)$ pair. 
Now the pair $(A_k,B_l)$  try to calculate the total probability $p$ of the following three cases.

\begin{enumerate}
	\item [\rm(a)]  Both of Alice and Bob are taking measurements $\{\sigma_3^{(0)}(1),\sigma_3^{(1)}(1)\}$  and with the same outcomes (00 or 11).  The corresponding probability $p_1$ is  
	$$ \frac{1}{9}\mathrm{Tr} \left[\left(\frac{(\mathbb{I}_2+\sigma_3)\otimes (\mathbb{I}_2+\sigma_3)} {4}+ \frac{(\mathbb{I}_2-\sigma_3)\otimes (\mathbb{I}_2-\sigma_3)} {4} \right)\rho_{A_k,B_l}  \right]. $$ 
	 
	\item [\rm(b)] Alice and Bob are taking measurements $\{\sigma_1^{(0)}(\lambda_k),\sigma_1^{(1)}(\lambda_k)\}$ and $\{\sigma_1^{(0)}(\gamma_l),\sigma_1^{(1)}(\gamma_l)\}$ respectively but with different outcomes (01 or 10). The corresponding probability $p_2$ is 
	$$ \frac{1}{9}\mathrm{Tr} \left[\left(\frac{(\mathbb{I}_2+\lambda_k\sigma_1)\otimes (\mathbb{I}_2-\gamma_l\sigma_1)} {4}+\frac{(\mathbb{I}_2-\lambda_k\sigma_1)\otimes (\mathbb{I}_2+\gamma_l\sigma_1)} {4}\right)\rho_{A_k,B_l}\right].$$
	 
	\item [\rm(c)]  Alice and Bob are taking measurements $\{\sigma_2^{(0)}(\lambda_k),\sigma_2^{(1)}(\lambda_k)\}$ and $\{\sigma_2^{(0)}(\gamma_l),\sigma_2^{(1)}(\gamma_l)\}$ respectively but with different outcomes (01 or 10). The corresponding probability  $p_3$ is   
	$$ \frac{1}{9}\mathrm{Tr}\left[\left( \frac{(\mathbb{I}_2+\lambda_k\sigma_2)\otimes (\mathbb{I}_2-\gamma_l\sigma_2)} {4}+ \frac{(\mathbb{I}_2-\lambda_k\sigma_2)\otimes (\mathbb{I}_2+\gamma_l\sigma_2)} {4} \right) \rho_{A_k,B_l}\right].$$
	  
\end{enumerate}
So $p$ is equal to $p_1+p_2+p_3$ which can be simplified as 
\begin{equation}
	\begin{split} 
	 \frac{1}{9}\mathrm{Tr }\Big[\Big(\frac{3}{2}\mathbb{I}_4+\frac{1}{2}( \sigma_3\otimes \sigma_3- \lambda_k\gamma_l \sigma_1\otimes \sigma_1 
	 -\lambda_k\gamma_l \sigma_2\otimes \sigma_2)\rho_{A_k,B_l}\Big)\Big].
	\end{split} 
\end{equation}
Note that $\mathrm{Tr}[\mathbb{I}_4 \rho_{A_k,B_l}]=1$. Therefore, $18(p_1+p_2+p_3-\frac{1}{9})$ is equal to
$$  \mathrm{Tr }\left[( \mathbb{I}_4 + \sigma_3\otimes \sigma_3- \lambda_k\gamma_l \sigma_1\otimes \sigma_1-\lambda_k\gamma_l \sigma_2\otimes \sigma_2)\rho_{A_k,B_l} \right]. $$ 
Hence, we could define an observable
$W_{k,l}:=\mathbb{I}_4 + \sigma_3\otimes \sigma_3- \lambda_k\gamma_l \sigma_1\otimes \sigma_1-\lambda_k\gamma_l \sigma_2\otimes \sigma_2.$ Surprisingly, one could show that 
$$\langle W_{k,l}\rangle_\rho:=\mathrm{Tr}[W_{k,l} \rho]\geq 0$$ for all separable states when $0\leq \lambda_k,\gamma_l\leq 1$ (the proof is very similar to that in \cite{Srivastava22}. So we omit it here). So $W_{k,l}$ can be looked as  an entanglement witness when $0\leq \lambda_k,\gamma_l\leq 1$.  We wish to use this witness $W_{k,l}$ to observe the entanglement of $\rho_{A_k,B_l}.$ Therefore, we would like to find the condition of  $\lambda_k,\gamma_l $ such that 
\begin{equation}
	\begin{split} 
	\displaystyle\langle W_{k,l}\rangle_{\rho_{A_k,B_l}}=\mathrm{Tr } [( \mathbb{I}_4 + \sigma_3\otimes \sigma_3- \lambda_k\gamma_l \sigma_1\otimes \sigma_1 
	-\lambda_k\gamma_l \sigma_2\otimes \sigma_2)\rho_{A_k,B_l}  ]<0.
	\end{split} 
\end{equation}

Now we want to  express $ \langle W_{k,l}\rangle_{\rho_{A_k,B_l}}$ in term of the initial state $\rho_{A_1,B_1}.$ Note that $\rho_{A_k,B_l}$ can be  seem as a postmeasurement states of $\rho_{A_{k-1},B_l}$ or $\rho_{A_k,B_{l-1}}$.  By von Neumann-L\"uder's rule~\cite{Busch96}, we have     the following recursive relations

\begin{widetext}
\begin{eqnarray}\label{next}
	\rho_{A_kB_{l}}&=&\frac{1}{3}\sum_{i=1}^3\sum_{a=0}^1 \left(\sqrt{\sigma_i^{(a)}({\lambda^{(i)}_{k-1}})}\otimes \mathbb{I}_2\right) \rho_{A_{k-1}B_{l}} \left(\sqrt{\sigma_i^{(a)} ({\lambda^{(i)}_{k-1}})} \otimes \mathbb{I}_2\right) \nonumber, \\
	\rho_{A_kB_{l}}&=&\frac{1}{3}\sum_{j=1}^3\sum_{b=0}^1  \left(\mathbb{I}_2\otimes \sqrt{\sigma_j^{(b)}{(\gamma^{(j)}_{l-1}})} \right) \rho_{A_{k}B_{l-1}} \left( \mathbb{I}_2\otimes \sqrt{\sigma_j^{(b)}{(\gamma^{(j)}_{l-1}})}\right).    
\end{eqnarray}

By some calculation, Eqs.~\eqref{next} could be tranferred  to  
 \begin{eqnarray}\label{eq:State_recursive}
		\rho_{A_kB_{l}}&=&\frac{1}{6} \sum_{i=1}^3\Big[\left(1+\Lambda_{k-1}^{(i)}\right)   \rho_{A_{k-1}B_{l}}
		+ \left(1-\Lambda_{k-1}^{(i)}\right)  \sigma_i \otimes \mathbb{I}_2   \rho_{A_{k-1}B_{l}} \sigma_i \otimes \mathbb{I}_2  \Big],\nonumber\\
	\rho_{A_kB_{l}}&=&\frac{1}{6} \sum_{j=1}^3\Big[\left(1+\Gamma_{l-1}^{(j)}\right)   \rho_{A_{k}B_{l-1}}
	+ \left(1-\Gamma_{l-1}^{(j)}\right)   \mathbb{I}_2   \otimes  \sigma_j \rho_{A_{k}B_{l-1}} \mathbb{I}_2   \otimes  \sigma_j  \Big]
	\end{eqnarray}
where $\Lambda_{k}^{(i)}=\sqrt{1 - \lambda^{(i)~2}_{k}}$ and  $\Gamma_{l}^{(j)}=\sqrt{1 - \gamma^{(j)~2}_{l}}$. Therefore, $\Lambda_{k}^{(3)}= \Gamma_l^{(3)}=0$ and $\Lambda_k^{(i)}=\sqrt{1 - \lambda^{2}_{k}}$  and $\Gamma_{l}^{(j)}=\sqrt{1 - \gamma^{2}_{l}}$ for $i,j\in\{1,2\}$ which will be denoted as  $\Lambda_{k}$ and $\Gamma_l$ respectively for simplicity.
So the key three terms of $ \langle W_{k,l}\rangle_{\rho_{A_k,B_l}}$ could be expressed by
\begin{eqnarray}\label{eq:basic}
	&\text{Tr}[\sigma_1 \otimes \sigma_1 \rho_{A_kB_l}]=\text{Tr}[\sigma_1 \otimes \sigma_1 \rho_{A_1B_1}]\displaystyle\prod_{i=1}^{k-1}\frac{\left(1+\Lambda_i\right)}{3} \prod_{j=1}^{l-1}\frac{\left(1+\Gamma_j\right)}{3},\nonumber \\ 
		&\text{Tr}[\sigma_2 \otimes \sigma_2 \rho_{A_kB_l}]=\text{Tr}[\sigma_2 \otimes \sigma_2 \rho_{A_1B_1}]\displaystyle\prod_{i=1}^{k-1}\frac{\left(1+\Lambda_i\right)}{3} \prod_{j=1}^{l-1}\frac{\left(1+\Gamma_j\right)}{3}, \nonumber \\
		&\text{Tr}[\sigma_3 \otimes \sigma_3 \rho_{A_kB_l}]=\text{Tr}[\sigma_3 \otimes \sigma_3 \rho_{A_1B_1}]\displaystyle\prod_{i=1}^{k-1}\frac{\left(1+2\Lambda_i\right)}{3} \displaystyle\prod_{j=1}^{l-1}\frac{\left(1+2\Gamma_j\right)}{3}.
\end{eqnarray}

\end{widetext}

In Ref. \cite{Pandit22}, the authors presented that arbitrary many pairs of $(A_k,B_k)$ could detect the entanglement of  $\rho_{A_1B_1}=|\Psi\rangle\langle \Psi|$  sequentially . In fact,  the measurement strategies are  similar as above except that $\gamma_k=\lambda_k$.  They found that the pair $(A_k,B_k)$ will able to witness the entanglement if  and only if ${\rm Tr}[\rho_{A_kB_k}W_{k,k}]<0$, i.e.,
\begin{equation}
  \lambda_k^2>\displaystyle\frac{1-\prod_{l=1}^{k-1}\displaystyle\frac{\left(1+2\Lambda_{l}\right)^2}{9}}
	{2\prod_{l=1}^{k-1}\displaystyle\frac{\left(1+\Lambda_{l}\right)^2}{9}}.
\end{equation} 
Therefore, they  defined the sequence, $\lambda^2_k$, for $k \in \mathbb{N}$ as follows,
\begin{equation}\label{eq:sa}
	\lambda^2_k =
	\left\{
	\begin{array}{cc}
		(1+\epsilon) \displaystyle\frac{1-\displaystyle\prod_{l=1}^{k-1}\frac{\left(1+2\Lambda_{l}\right)^2}{9}}
		{2\displaystyle\prod_{l=1}^{k-1}\displaystyle\frac{\left(1+\Lambda_{l}\right)^2}{9}} &\mbox {if } \lambda^2_{k-1}\in (0,1) \\\\
		{\rm \infty}, &{\rm otherwise},
	\end{array}
	\right.
\end{equation}
with $0<\lambda^2_1<1$ and where $\epsilon > 0$.

Choosing $\lambda_1=0.005$ and $\epsilon=4$, the parameters $(\lambda_1,\lambda_2,\lambda_3,\lambda_4,\lambda_5)$  defined  by Eq.~\eqref{eq:sa} are
 $(0.0050, 0.0097, 0.0317, 0.1459, 0.9844).$
   Define the difference matrix  $D=(d_{kl})$  whose $(k,l)$ entry is 
   $$ d_{kl}:=\lambda_k \lambda_l-\displaystyle\frac{1-\displaystyle\prod_{i=1}^{k-1}\frac{\left(1+2\Lambda_{i}\right)}{3} \displaystyle\prod_{j=1}^{l-1}\frac{\left(1+2\Lambda_{j}\right)}{3}}
   {2\displaystyle\prod_{i=1}^{k-1}\frac{\left(1+\Lambda_{i}\right)}{3} \displaystyle\prod_{j=1}^{l-1}\frac{\left(1+\Lambda_{j}\right)}{3}}.$$
   One finds  that the pair $(A_k,B_l)$ will able to witness the entanglement if  and only if ${\rm Tr}[\rho_{A_kB_l}W_{k,l}]<0$, i.e., $d_{kl}>0.$ A numerical calculation shows that
$$ 100D\approx  \left[\begin{array}{rrrrr}
    0.0025  &  0.0042  &  0.0114   & 0.0099  & -1.4184\\[2mm]
0.0042 &    0.0075  &   0.0226 &    0.0446  &  -1.9159\\[2mm]
0.0114  &   0.0226  &   0.0802  &   0.3049 &   -1.2054\\[2mm]
0.0099  &   0.0446  &   0.3049  &   1.7031  &   7.5946\\[2mm]
-1.4184 &   -1.9159 &   -1.2054 &    7.5946  &  77.5252 
  \end{array}\right]
$$
   which indicates that although $A_5 B_5$ can detect the entanglement through $W_{55}$ but  $A_k B_l$ could not   through $W_{kl}$ for $(k,l)\in \{(1,5),(2,5),(3,5),(5,1),(5,2),(5,3)\}$ (see Fig. \ref{fig:five_pair} for an intuition). Therefore, it is interesting to change the measurement strategies such that all the corresponding pairs could detect the entanglement by themselves without communication with other members.

      \begin{figure}[h]
  	\centering
  	\includegraphics[width=0.75\columnwidth]{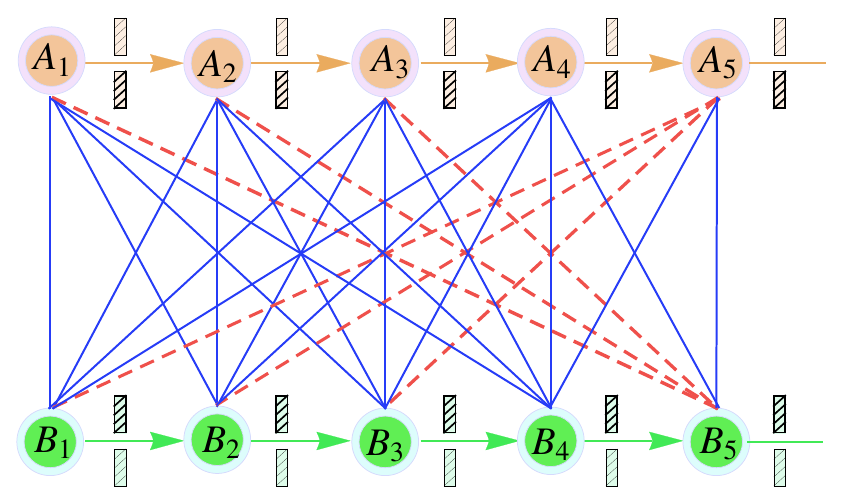}
  	\caption{Sequentially witnessing entanglement by multiple independent Alices and Bobs in Example.  The pair $(A_k,B_l)$ with blue line indicates the corresponding pair could witness the entanglement  while the  pair $(A_k,B_l)$ with red line indicates the corresponding pair could not witness the entanglement.  }
  	\label{fig:five_pair}
  \end{figure}

\section{Sequential entanglement detection between  arbitrarily many Alices and Bobs}\label{sect:main}

In this section, we aim at constructing some measurement strategies such that all the pairs $(A_k,B_l)$ ($1\leq k\leq m, 1\leq l\leq n$) could  detect the entanglement   regardless of the initial state being a Bell state or a particular class of mixed entangled states.  
\subsection{Maximally entangled state as initial state}
We assume that the initial state is $|\Psi\rangle$, i.e., it's density matrix is  $\rho_{A_1B_1}=|\Psi\rangle\langle \Psi|$. As $\{\mathbb{I}_2\otimes \mathbb{I}_2,\sigma_1\otimes \sigma_1,\sigma_2\otimes \sigma_2,\sigma_3\otimes \sigma_3\}$ form a basis of $4\times 4$ matrices,  it can be easily checked that 
\begin{equation}\label{eq:maxinitial}
\rho_{A_1B_1}=\frac{1}{4}\left(\mathbb{I}_2\otimes \mathbb{I}_2+\sigma_1\otimes \sigma_1+ \sigma_2\otimes \sigma_2-\sigma_3\otimes \sigma_3\right).
\end{equation}
As the detector is 
\begin{equation}\label{eq:witness}
	W_{k,l}= \mathbb{I}_4 +\sigma_3 \otimes \sigma_3 - \lambda_k\sigma_1 \otimes \gamma_l\sigma_1 - \lambda_k\sigma_2 \otimes \gamma_l\sigma_2,
\end{equation}
then by Eq.~\eqref{eq:basic}, we have  $\langle W_{k,l}\rangle_{\rho_{A_kB_l}}={\rm Tr}[W_{k,l}\rho_{A_kB_l}]$ which is equal to 
$$\begin{array}{l}
\displaystyle  1-\displaystyle\prod_{i=1}^{k-1}\frac{\left(1+2\Lambda_i\right)}{3} \displaystyle\prod_{j=1}^{l-1}\frac{\left(1+2\Gamma_j\right)}{3} 
-2\lambda_k\gamma_l\displaystyle\prod_{i=1}^{k-1}\frac{\left(1+\Lambda_i\right)}{3} \displaystyle\prod_{j=1}^{l-1}\frac{\left(1+\Gamma_j\right)}{3}.
\end{array} 
$$ 
Therefore, the first pair will observe entanglement if $-\frac{1}{2}\lambda_1\gamma_1={\rm Tr}[\rho_{A_1B_1}W_{1,1}]<0$, which implies
\begin{equation}
      \lambda_1\gamma_1>0.
\end{equation}
And the pair $A_kB_l$ will able to witness the entanglement if ${\rm Tr}[\rho_{A_kB_l}W_{k,l}]<0$, which implies
\begin{equation}\label{eq:total}
  \lambda_k \gamma_l>\displaystyle\frac{1-\displaystyle\prod_{i=1}^{k-1}\frac{\left(1+2\Lambda_{i}\right)}{3} \displaystyle\prod_{j=1}^{l-1}\frac{\left(1+2\Gamma_{j}\right)}{3}}
   {2\displaystyle\prod_{i=1}^{k-1}\frac{\left(1+\Lambda_{i}\right)}{3} \displaystyle\prod_{j=1}^{l-1}\frac{\left(1+\Gamma_{j}\right)}{3}}.
\end{equation} 
Let us define $\lambda_k(\theta)$, and $\gamma_l(\theta)$ for $k,l \in \mathbb{N}$ as   functions of a parameter $\theta$ on the interval $(0,1)$. Let $\lambda_1(\theta)=\gamma_1(\theta)=\theta$ and 
 for $k\geq 2$  
\begin{equation}\label{eq:define_lambdak}
\lambda_k(\theta) =
\left\{
\begin{array}{cc}
       \epsilon  \frac{1-\displaystyle\prod_{i=1}^{k-1}\frac{\left(1+2\Lambda_{i}(\theta)\right)}{3} }
      {L\gamma_1(\theta)\displaystyle\prod_{i=1}^{k-1}\frac{\left(1+\Lambda_{i}(\theta)\right)}{3} } &\mbox {if } \lambda_{k-1}(\theta)\in (0,1) \\\\
      {\rm \infty}, &{\rm otherwise},
\end{array}
\right.
\end{equation}
and 
\begin{equation}\label{eq:define_gammal}
\gamma_l(\theta) =\left\{
\begin{array}{cc}
	 \epsilon \frac{1-\displaystyle\prod_{j=1}^{l-1}\frac{\left(1+2\Gamma_{j}(\theta)\right)}{3} }	{L\lambda_1(\theta)\displaystyle\prod_{j=1}^{l-1}\frac{\left(1+\Gamma_{j}(\theta)\right)}{3} } &\mbox {if } \gamma_{l-1}(\theta)\in (0,1) \\\\
	{\rm \infty}, &{\rm otherwise},
\end{array}
\right.
\end{equation}
 where $\epsilon > 2$ and $L=2$ in this setting.
 
 \begin{lemma}\label{lemma:sequence_property}
 Let $L\leq 2<\epsilon$ and let $\big(\lambda_k(\theta)\big)_{k\in \mathbb{N}}$ 
 and$\big(\gamma_l(\theta)\big)_{l\in \mathbb{N}}$ be the sequences defined by Eqs.~\eqref{eq:define_lambdak} and \eqref{eq:define_gammal} respectively with $\lambda_1(\theta)=\gamma_1(\theta)=\theta.$ Then   the sequences  $\big(\lambda_k(\theta)\big)_{k\in \mathbb{N}}$ 
 and$\big(\gamma_l(\theta)\big)_{l\in \mathbb{N}}$ are the same sequence and  they are   increasing. Moreover, the limit $\lim_{\theta\rightarrow 0+} (\lambda_k(\theta)/\theta)$ exists  for any $k\in \mathbb{N}.$ Suppose that $a_k:=\lim_{\theta\rightarrow 0+} (\lambda_k(\theta)/\theta)$, then we have the following recursive relation
 $$ a_k=\frac{\epsilon}{3L}(\frac{3}{2})^{k-1}(\sum_{i=1}^{k-1} a_i^2).$$
 \end{lemma}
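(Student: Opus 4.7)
The plan is to address the four claims---equality of the two sequences, monotonicity, existence of the limit $a_k$, and the recursive formula---in order, with the main engine being a second-order Taylor expansion near $\theta=0$. Throughout I will write $A_k:=\prod_{i=1}^{k-1}(1+2\Lambda_i)/3$ and $B_k:=\prod_{i=1}^{k-1}(1+\Lambda_i)/3$, so that $\lambda_k=\epsilon(1-A_k)/(L\theta B_k)$ for $k\geq 2$. The identity $\lambda_k(\theta)=\gamma_k(\theta)$ is immediate by induction on $k$: the base case $k=1$ is the hypothesis, and once $\lambda_i=\gamma_i$ for $i<k$ one has $\Lambda_i=\Gamma_i$, so Eqs.~\eqref{eq:define_lambdak} and \eqref{eq:define_gammal} give the same value (the denominators $L\gamma_1$ and $L\lambda_1$ are both $L\theta$).

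For the monotonicity $\lambda_{k+1}>\lambda_k$ with $k\geq 2$, I would use the recursions $A_{k+1}=A_k(1+2\Lambda_k)/3$ and $B_{k+1}=B_k(1+\Lambda_k)/3$ to compute
\[
\frac{\lambda_{k+1}}{\lambda_k}=\frac{3(1-A_{k+1})}{(1+\Lambda_k)(1-A_k)}=\frac{3-A_k-2A_k\Lambda_k}{(1+\Lambda_k)(1-A_k)};
\]
clearing denominators, the required inequality reduces to the clean algebraic statement $2>\Lambda_k(1+A_k)$, which is strict whenever $\lambda_k\in(0,1)$ because then $\Lambda_k<1$ and $A_k\leq 1$. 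The remaining initial step $\lambda_2>\lambda_1$ is handled by the explicit simplification $\lambda_2=2\epsilon\theta/\bigl(L(1+\Lambda_1)^2\bigr)$, obtained from $(1-\Lambda_1)(1+\Lambda_1)=\theta^2$, and it is here that the quantitative hypothesis $L\leq 2<\epsilon$ is invoked.

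For the existence of $a_k$ and the recursion I would proceed by induction on $k$, proving both simultaneously. The base case is $a_1=1$. For the inductive step, assume $\lambda_i(\theta)=a_i\theta+o(\theta)$ for $i<k$; the identity $1-\Lambda_i=\lambda_i^2/(1+\Lambda_i)$ gives $(1-\Lambda_i)/\theta^2\to a_i^2/2$. Since $(1+2\Lambda_i)/3=1-\tfrac{2}{3}(1-\Lambda_i)$, each factor in $A_k$ equals $1-a_i^2\theta^2/3+o(\theta^2)$; multiplying these $k-1$ factors via $\prod_i(1-x_i)=1-\sum_i x_i+O\bigl((\sum_i x_i)^2\bigr)$ yields $(1-A_k)/\theta^2\to\tfrac{1}{3}\sum_{i=1}^{k-1}a_i^2$, while $B_k\to(2/3)^{k-1}$. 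Substituting into the formula for $\lambda_k$ gives
\[
\frac{\lambda_k}{\theta}\longrightarrow\frac{\epsilon}{3L}\left(\frac{3}{2}\right)^{k-1}\sum_{i=1}^{k-1}a_i^2,
\]
which simultaneously establishes existence of the limit and the claimed recursion for $a_k$.

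The main technical obstacle is the initial monotonicity step $\lambda_2>\lambda_1$, where the clean algebraic argument used for $k\geq 2$ is not available and the quantitative hypothesis on $\epsilon$ and $L$ must be invoked directly (in contrast, the inductive inequality $2>\Lambda_k(1+A_k)$ is automatic). A secondary concern is aggregating the $o(\theta^2)$ error terms from the $k-1$ factors of $A_k$; this is handled cleanly by the elementary product expansion above together with the fact that $k$ is fixed while $\theta\to 0$, so only finitely many error terms are collected and they remain $o(\theta^2)$.
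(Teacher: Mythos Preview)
Your proposal is correct and follows essentially the same route as the paper: induction for the identity $\lambda_k=\gamma_k$, a ratio computation $\lambda_{k+1}/\lambda_k$ for monotonicity (with the base step $\lambda_2>\lambda_1$ treated separately via the closed form $\lambda_2=2\epsilon\theta/[L(1+\Lambda_1)^2]$), and a second-order Taylor expansion of $1-\Lambda_i$ to evaluate $\lim_{\theta\to 0^+}\lambda_k(\theta)/\theta$ by induction. The only cosmetic difference is that for $k\ge 2$ you reduce the inequality to $2>\Lambda_k(1+A_k)$, whereas the paper factors the ratio as $\tfrac{3}{1+\Lambda_k}\cdot\tfrac{1-A_{k+1}}{1-A_k}$ and observes each factor exceeds $1$; note, incidentally, that for the base step both your argument and the paper's tacitly rely on $\epsilon/4\ge 1$ (i.e.\ $\epsilon\ge 4$, as assumed in the subsequent theorems) rather than merely $\epsilon>2$.
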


 \begin{theorem}\label{thm:Maximally}
  Let $L=2$ and  $ \epsilon\geq 4.$ 
    For any $m,n\in\mathbb{N},$  there exists a $\theta_{m,n}\in(0,1)$ such that the sequences $\big(\lambda_k:=\lambda_k(\theta_{m,n})\big)_{k=1}^m$ and  $\big(\gamma_l:=\gamma_l(\theta_{m,n})\big)_{l=1}^n$  satisfies inequalities \eqref{eq:total} for each pair $(k,l)$ ($1\leq k\leq m,$ $1\leq l\leq n$), that is, the detector $W_{k,l} $  defined by Eq.~\eqref{eq:witness} can witness the entanglement of $\rho_{A_kB_l}$    where  $\rho_{A_kB_l}$ is defined by the recursive relation \eqref{eq:State_recursive} with $\rho_{A_1B_1}$ given by Eq.~\eqref{eq:maxinitial}.
    
 \end{theorem}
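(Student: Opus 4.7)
The plan is to reduce the witnessing condition \eqref{eq:total} to a single explicit inequality in $\theta$ and then use Lemma~\ref{lemma:sequence_property} to control its behaviour as $\theta\to 0^+$. First, with the shorthand $P_k := \prod_{i=1}^{k-1}(1+2\Lambda_i)/3$, $P'_k := \prod_{i=1}^{k-1}(1+\Lambda_i)/3$, and analogous $Q_l, Q'_l$ for the $\Gamma$'s, I would substitute the defining formulas \eqref{eq:define_lambdak}--\eqref{eq:define_gammal} together with $\lambda_1 = \gamma_1 = \theta$ into $\lambda_k\gamma_l$ and rewrite \eqref{eq:total} in the equivalent form
\[
\frac{2\epsilon^2}{L^2\theta^2}(1-P_k)(1-Q_l) > 1 - P_kQ_l
\]
for $k,l \ge 2$. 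Using the inequality $1-P_kQ_l \le (1-P_k)+(1-Q_l)$ (which follows from $0\le P_k,Q_l\le 1$), it is enough to verify the stronger inequality $\frac{2\epsilon^2}{L^2\theta^2}(1-P_k)(1-Q_l) > (1-P_k) + (1-Q_l)$.

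The boundary cases can be dispatched immediately. The case $(k,l)=(1,1)$ reduces to $\theta^2>0$, while for $k=1$, $l\ge 2$ (and symmetrically $k\ge 2$, $l=1$), the identities $P_1=P'_1=1$ give a ratio of $\lambda_1\gamma_l$ to the right-hand side of \eqref{eq:total} equal to exactly $2\epsilon/L$, which is at least $4$ for $L=2$, $\epsilon\ge 4$, and so these cases hold for every admissible $\theta$.

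For the main case $k,l\ge 2$, I would carry out an asymptotic analysis as $\theta\to 0^+$. The Taylor expansion $(1+2\Lambda_i)/3 = 1 - \lambda_i^2/3 + O(\lambda_i^4)$, combined with $\lambda_i(\theta) = a_i\theta + o(\theta)$ from Lemma~\ref{lemma:sequence_property}, yields $1-P_k = \theta^2\pi_k + o(\theta^2)$ where $\pi_k := \sum_{i=1}^{k-1} a_i^2/3$, and analogously $1-Q_l = \theta^2\rho_l + o(\theta^2)$. Substituting into the sufficient inequality above and dividing both sides by $\theta^4$, its two sides have ratio tending to
\[
\frac{2\epsilon^2}{L^2}\cdot\frac{\pi_k\rho_l}{\pi_k+\rho_l}.
\]
Since Lemma~\ref{lemma:sequence_property} forces $a_1=1$ and $\pi_k$ is nondecreasing in $k$, one has $\pi_k,\rho_l\ge 1/3$, hence $\pi_k\rho_l/(\pi_k+\rho_l)\ge 1/6$; the limiting ratio is therefore at least $\epsilon^2/(3L^2)\ge 4/3>1$ whenever $L=2$ and $\epsilon\ge 4$.

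Finally, because there are only finitely many pairs $(k,l)$ with $1\le k\le m$, $1\le l\le n$ and each corresponding ratio is continuous in $\theta$ with a strictly positive limiting value bounded uniformly away from $1$, one may select $\theta_{m,n}\in(0,1)$ small enough that all $\lambda_k(\theta_{m,n})$ and $\gamma_l(\theta_{m,n})$ remain in $(0,1)$ (which is possible by Lemma~\ref{lemma:sequence_property}) and all of the inequalities \eqref{eq:total} hold simultaneously. The main obstacle I foresee is controlling the $o(\theta^2)$ error terms uniformly in $(k,l)$ on the finite range $k\le m$, $l\le n$: this reduces to a finite bookkeeping task once the recursion in Lemma~\ref{lemma:sequence_property} is used to compute the $a_k$, but without that lemma one would have to analyse the coupled nonlinear sequences $\lambda_k(\theta)$ and $\gamma_l(\theta)$ directly, which would be considerably messier.
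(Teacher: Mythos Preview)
Your proof is correct and follows essentially the paper's approach: both use Lemma~\ref{lemma:sequence_property} to reduce \eqref{eq:total} to an asymptotic inequality equivalent to $\epsilon^2/(3L)>2$, handle the boundary pairs $(1,l)$ and $(k,1)$ directly from the defining formulas, and then invoke finiteness of the index set to choose $\theta_{m,n}$. Your preliminary cancellation of $P'_kQ'_l$ together with the bound $1-P_kQ_l\le(1-P_k)+(1-Q_l)$ is a mild streamlining of the paper's direct computation of $\lim_{\theta\to0^+}(\lambda_k\gamma_l-f_{kl})/\theta^2$ (note that ``dividing by $\theta^4$'' should read $\theta^2$, but the limiting ratio you write down is correct).
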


 \begin{proof}
 By Lemma \ref{lemma:sequence_property}, 
 for each $k\in \mathbb{N},$  
 $\lambda_k(\theta)=\gamma_k(\theta)$ and  the limit $\lim_{\theta\rightarrow 0+} (\lambda_k(\theta)/\theta)$   exists. 
 Suppose $a_k:=\lim_{\theta\rightarrow 0+} (\lambda_k(\theta)/\theta)$ we also have
 $$ a_k=\frac{\epsilon}{3L}(\frac{3}{2})^{k-1}(\sum_{i=1}^{k-1} a_i^2)$$
 with $a_1=1.$ As $\epsilon\geq 4$, we can deduce that 
 $$ a_k\geq \sum_{i=1}^{k-1}a_i^2\geq a_{k-1} ^2\geq a_{k-1}\geq 1.$$ Set 
 $$ f_{kl}(\theta):=\displaystyle\frac{1-\displaystyle\prod_{i=1}^{k-1}\frac{\left(1+2\Lambda_{i}(\theta)\right)}{3} \displaystyle\prod_{j=1}^{l-1}\frac{\left(1+2\Gamma_{j}(\theta)\right)}{3}}
   {L\displaystyle\prod_{i=1}^{k-1}\frac{\left(1+\Lambda_{i}(\theta)\right)}{3} \displaystyle\prod_{j=1}^{l-1}\frac{\left(1+\Gamma_{j}(\theta)\right)}{3}}.$$
   With similar argument as the proof of Eq.~\eqref{eq:limitn1} in Appendix \ref{appendix:A}, one finds
   $$ \lim_{\theta\rightarrow 0+}  \frac{f_{kl}(\theta)}{\theta^2}=\frac{1}{3L}(\frac{3}{2})^{k+l-2}(\sum_{i=1}^{k-1} a_i^2+\sum_{j=1}^{l-1} a_j^2).$$
   Therefore, for $2\leq k\leq m, 2\leq l\leq n,$ the limit $L_{kl}:=\lim_{\theta\rightarrow 0+}  \frac{\lambda_k(\theta) \gamma_l(\theta)-f_{kl}(\theta)}{\theta^2}$ is equal to 
    $$ \frac{1}{3L}(\frac{3}{2})^{k+l-2}\left[ \frac{\epsilon^2}{3L} (\sum_{i=1}^{k-1} a_i^2)(\sum_{j=1}^{l-1} a_j^2)- (\sum_{i=1}^{k-1} a_i^2+\sum_{j=1}^{l-1} a_j^2)\right]$$
    which is positive provided the condition $\frac{\epsilon^2}{3L}>2.$  
    Let $\mathcal{M}=\max\{a_m,a_n\}$ and $\mathcal{L}=\min\{L_{k,l}\}_{k=2,l=2}^{m,n}.$ By the properties of limit, there exists a small enough $\theta_0>0$ (which is assumed to be less than $\frac{1}{\mathcal{M}+1}$ here)  such that for all $\theta_{m,n}\in(0,\theta_0)$ are satisfies
    $$ \begin{array}{l} \displaystyle  \lambda_i(\theta_{m,n})  = \gamma_i(\theta_{m,n}) <(\mathcal{M}+1)\theta_{m,n}<1,\\ [2mm]
   \displaystyle  \lambda_k(\theta_{m,n}) \gamma_l(\theta_{m,n})-f_{kl}(\theta_{m,n}) >\frac{\mathcal{L}}{2} {\theta_{m,n}^2}>0
     \end{array}
     $$
     for all $1\leq i\leq \max\{m,n\}, 2\leq k\leq m,2\leq l\leq n$ which implies the validity of inequalities of \eqref{eq:total} for these   $k,l.$
     On the other hand, for the pairs $(A_1,B_l)$ and $(A_k,B_1)$, the inequality \eqref{eq:total}    holds  by the definition of $\lambda_k(\theta_{m,n})$ and $\gamma_l(\theta_{m,n})$ as $\epsilon \geq4.$ 
     \qed
 \end{proof}

\begin{corollary} 
	For any positive integers $m,n$, if the first pair of Alices and Bobs in an  $(m,n)$-sequential scenario  share a  Bell state, then there exist some measurement strategies such that 
	all pairs $(A_k,B_l)$ ($1\leq k\leq m, 1\leq l\leq n$) could witness the entanglement sequentially. 
	
\end{corollary}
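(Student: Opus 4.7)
The corollary is essentially a restatement of Theorem~\ref{thm:Maximally} in Bell-state language, so the plan is to derive it by direct appeal to that theorem together with a standard local-unitary reduction. First I would observe that for the particular Bell state $|\Psi\rangle=\tfrac{1}{\sqrt{2}}(|01\rangle+|10\rangle)$ fixed in Section~\ref{sect:measurement}, Theorem~\ref{thm:Maximally} already provides, for any chosen $m$ and $n$, sequences $(\lambda_k)_{k=1}^{m}$ and $(\gamma_l)_{l=1}^{n}$ (obtained via Eqs.~\eqref{eq:define_lambdak} and \eqref{eq:define_gammal} with $L=2$, $\epsilon\geq 4$, and a sufficiently small $\theta_{m,n}$) such that each witness $W_{k,l}$ detects entanglement of the reduced state $\rho_{A_kB_l}$. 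So for this particular Bell state the corollary is immediate.

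Next I would handle the remaining three Bell states by local-unitary equivalence. For any Bell state $|\Phi\rangle$ there exist single-qubit unitaries $U_A$ and $U_B$ with $|\Phi\rangle=(U_A\otimes U_B)|\Psi\rangle$. Taking as Alice's $k$-th POVMs the set $\{U_A\,\sigma_i^{(a)}(\lambda^{(i)}_k)\,U_A^{\dagger}\}$ and as Bob's $l$-th POVMs the set $\{U_B\,\sigma_j^{(b)}(\gamma^{(j)}_l)\,U_B^{\dagger}\}$ has the effect of conjugating the entire sequential process by $U_A\otimes U_B$. In particular, the post-measurement states become $(U_A\otimes U_B)\,\rho_{A_kB_l}\,(U_A\otimes U_B)^{\dagger}$, where $\rho_{A_kB_l}$ are the states from Theorem~\ref{thm:Maximally}. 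The corresponding witnesses transform to $(U_A\otimes U_B)\,W_{k,l}\,(U_A\otimes U_B)^{\dagger}$, which are again entanglement witnesses because the set of separable states is invariant under local unitaries, and their expectation values on the transformed states coincide with $\langle W_{k,l}\rangle_{\rho_{A_kB_l}}<0$. Hence every pair $(A_k,B_l)$ still witnesses entanglement.

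The only technical point to verify is the covariance of the recursion~\eqref{eq:State_recursive} under conjugation by $U_A\otimes U_B$, but this is immediate from its form: each term on the right-hand side is built from $\sigma_i\otimes\mathbb{I}_2$ or $\mathbb{I}_2\otimes\sigma_j$, and replacing these Pauli operators by their $U_A$- or $U_B$-conjugates throughout reproduces the same recursion acting on the transformed states. There is accordingly no substantive obstacle beyond invoking Theorem~\ref{thm:Maximally}.
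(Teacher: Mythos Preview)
Your proposal is correct and matches the paper's approach: the corollary is stated without proof as an immediate consequence of Theorem~\ref{thm:Maximally}, which is exactly what your first paragraph does for the specific state $|\Psi\rangle$. Your second and third paragraphs go beyond the paper by extending the statement to all four Bell states via local-unitary conjugation; this extra argument is sound (the covariance of the recursion~\eqref{eq:State_recursive} and the invariance of separability under local unitaries are just as you describe), though the paper itself appears to intend ``a Bell state'' in the narrow sense of the fixed $|\Psi\rangle$ used throughout Section~\ref{sect:main}.
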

 
\subsection{Pure entangled states and a class of mixed entangled states as initial state}

Now we explore whether the aforementioned property holds true for other initial states.  It is well known that any two qubits entangled state could be locally unitary equivalent to the following form 
$$|\Psi_\alpha\rangle=\sqrt{\alpha}|01\rangle+\sqrt{1-\alpha}|10\rangle,$$
for some $\alpha\in (0,1]$. Now we assume that the initial state $\rho_{A_1,B_1}$ is of the following form 
\begin{equation}\label{eq:mixinitial}
	\rho_{A_1B_1}=p_1|\Psi_\alpha\rangle\langle\Psi_\alpha|+p_2|01\rangle\langle01|+p_3|10\rangle\langle10| 
\end{equation}
where $p_1>0$, $p_2,p_3\geq0$ and $p_1+p_2+p_3=1$.   Note that 
$$\begin{array}{l}
	 \mathrm{Tr}[\sigma_1\otimes \sigma_1 |01\rangle \langle 01|]=\mathrm{Tr}[\sigma_1\otimes \sigma_1 |10\rangle \langle 10|]=0,\\[2mm]
	 \mathrm{Tr}[\sigma_2\otimes \sigma_2 |01\rangle \langle 01|]=\mathrm{Tr}[\sigma_2\otimes \sigma_2 |10\rangle \langle 10|]=0,\\[2mm]
	 \mathrm{Tr}[\sigma_3\otimes \sigma_3 |01\rangle \langle 01|]=\mathrm{Tr}[\sigma_3\otimes \sigma_3 |10\rangle \langle 10|]=-1.
	\end{array}$$
As $
	W_{k,l}= \mathbb{I}_4 +\sigma_3 \otimes \sigma_3 - \lambda_k \gamma_l\sigma_1 \otimes\sigma_1 - \lambda_k\gamma_l\sigma_2 \otimes \sigma_2,$ we always have 
	$$\mathrm{Tr}[W_{k,l} |01\rangle\langle 01|]=\mathrm{Tr}[W_{k,l} |01\rangle\langle 01|]=0.$$
	Therefore,  
	$\mathrm{Tr}[W_{k,l}  \rho_{A_1,B_1}]= p_1 \mathrm{Tr}[W_{k,l}  |\Psi_\alpha\rangle\langle \Psi_\alpha|].$
Moreover,  $ |\Psi_\alpha\rangle\langle \Psi_\alpha|$ has similar decomposition  as of $|\Psi\rangle\langle\Psi|$ (see Eq. \eqref{eq:maxinitial})
\begin{equation}
   \begin{split}
    |\Psi_\alpha\rangle\langle\Psi_\alpha|= \frac{1}{4}\Big[\mathbb{I}_4  +2\sqrt{\alpha(1-\alpha)} \Big(\sigma_1 \otimes \sigma_1  
     +\sigma_2 \otimes \sigma_2 
    -\sigma_3 \otimes \sigma_3\Big)\Big].
\end{split} 
\end{equation}

   Therefore, the first pair will observe entanglement if $-p_1 4\sqrt{\alpha(1-\alpha)}\lambda_1\gamma_1={\rm Tr}[\rho_{A_1B_1}W_{1,1}]<0$, which implies
$
   	\lambda_1\gamma_1>0.
$
   And the pair $A_kB_l$ will able to witness the entanglement if ${\rm Tr}[\rho_{A_kB_l}W_{k,l}]<0$, which implies
   \begin{equation}\label{eq:totalmix}
   	\lambda_k \gamma_l>\frac{1-\displaystyle\prod_{i=1}^{k-1}\frac{\left(1+2\Lambda_{i}\right)}{3} \displaystyle\prod_{j=1}^{l-1}\frac{\left(1+2\Gamma_{j}\right)}{3}}
   	{4p_1\sqrt{\alpha(1-\alpha)}\displaystyle\prod_{i=1}^{k-1}\frac{\left(1+\Lambda_{i}\right)}{3} \displaystyle\prod_{j=1}^{l-1}\frac{\left(1+\Gamma_{j}\right)}{3}}.
   \end{equation}

Let us define $\lambda_k(\theta)$, and $\gamma_l(\theta)$ for $k,l \in \mathbb{N}$ as   functions of a parameter $\theta$ on the interval $(0,1)$. Let $\lambda_1(\theta)=\gamma_1(\theta)=\theta$ and 
 for $k\geq 2$  we define $\lambda_k(\theta) $ and $ \gamma_l(\theta)$ by  using the same recursive relations  as  Eqs.~\eqref{eq:define_lambdak} and \eqref{eq:define_gammal}  but  we assume $L=4p_1\sqrt{\alpha(1-\alpha)}$ here. We also have every possible pairs $(A_k,B_l)$ of arbitrarily many Alices and Bobs can observe the entanglement  by the sequentially witness defined    in the above setting.

 \begin{theorem}\label{thm:Mix}
      Let $L=4p_1\sqrt{\alpha(1-\alpha)}$ and $ \epsilon\geq 4.$ For any $m,n\in\mathbb{N},$  there exists a $\theta_{m,n}\in(0,1)$ such that the sequences $\big(\lambda_k:=\lambda_k(\theta_{m,n})\big)_{k=1}^m$ and  $\big(\gamma_l:=\gamma_l(\theta_{m,n})\big)_{l=1}^n$  satisfies inequalities \eqref{eq:total} for each pair $(k,l)$ ($1\leq k\leq m,$ $1\leq l\leq n$), that is, the detector $W_{k,l} $  defined by Eq.~\eqref{eq:witness} can witness the entanglement of $\rho_{A_kB_l}$  where  $\rho_{A_kB_l}$ is defined by the recursive relation \eqref{eq:State_recursive} with $\rho_{A_1B_1}$ given by Eq.~\eqref{eq:mixinitial}.
    
 \end{theorem}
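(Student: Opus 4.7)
My plan is to reduce Theorem~\ref{thm:Mix} to the machinery already built for Theorem~\ref{thm:Maximally} by showing that the mixed initial state \eqref{eq:mixinitial} produces a witness condition of exactly the same structural shape as \eqref{eq:total}, but with the constant $2$ in the denominator replaced by the effective weight $L=4p_1\sqrt{\alpha(1-\alpha)}$. To see this, I would exploit the linearity of the recursion \eqref{eq:State_recursive} in $\rho_{A_1B_1}$ and apply the scaling identities \eqref{eq:basic} summand-by-summand to \eqref{eq:mixinitial}. The product components $|01\rangle\langle 01|$ and $|10\rangle\langle 10|$ contribute zero to $\mathrm{Tr}[\sigma_j\otimes\sigma_j\,\rho_{A_1B_1}]$ for $j\in\{1,2\}$ and each contribute $-1$ to $\mathrm{Tr}[\sigma_3\otimes\sigma_3\,\rho_{A_1B_1}]$. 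Combining these with the $-p_1$ contribution of $|\Psi_\alpha\rangle\langle\Psi_\alpha|$ and the normalization $p_1+p_2+p_3=1$, the overall $\sigma_3\otimes\sigma_3$ coefficient collapses to $-1$ (exactly as in the Bell case), while the $\sigma_1\otimes\sigma_1$ and $\sigma_2\otimes\sigma_2$ coefficients reduce to $2p_1\sqrt{\alpha(1-\alpha)}$. Imposing $\langle W_{k,l}\rangle_{\rho_{A_kB_l}}<0$ then yields precisely \eqref{eq:totalmix}.

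With \eqref{eq:totalmix} in hand, I would invoke Lemma~\ref{lemma:sequence_property} with this new $L$. Since $p_1\in(0,1]$ and $\alpha\in(0,1)$ in the entangled range, $L\in(0,2]$, so the hypothesis $L\le 2<\epsilon$ of the lemma is met. Consequently $\bigl(\lambda_k(\theta)\bigr)_k$ and $\bigl(\gamma_l(\theta)\bigr)_l$ coincide as a common increasing sequence, and the limits $a_k:=\lim_{\theta\to 0^+}\lambda_k(\theta)/\theta$ satisfy $a_k=\tfrac{\epsilon}{3L}\bigl(\tfrac{3}{2}\bigr)^{k-1}\sum_{i=1}^{k-1}a_i^2$ with $a_1=1$. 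From $\epsilon\ge 4$ and $L\le 2$ one retains $\epsilon/(2L)\ge 1$, which inductively propagates to $a_k\ge a_{k-1}^2\ge\cdots\ge 1$, exactly as in the Bell-state case.

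Finally, I would repeat the small-$\theta$ analysis of Theorem~\ref{thm:Maximally} verbatim. Setting
\[
f_{kl}(\theta):=\frac{1-\prod_{i=1}^{k-1}\tfrac{1+2\Lambda_i(\theta)}{3}\prod_{j=1}^{l-1}\tfrac{1+2\Gamma_j(\theta)}{3}}{L\prod_{i=1}^{k-1}\tfrac{1+\Lambda_i(\theta)}{3}\prod_{j=1}^{l-1}\tfrac{1+\Gamma_j(\theta)}{3}},
\]
the same Taylor expansion yields $\lim_{\theta\to 0^+}f_{kl}(\theta)/\theta^2=\tfrac{1}{3L}\bigl(\tfrac{3}{2}\bigr)^{k+l-2}\bigl(\sum_{i<k}a_i^2+\sum_{j<l}a_j^2\bigr)$, so that the limit of $\bigl(\lambda_k(\theta)\gamma_l(\theta)-f_{kl}(\theta)\bigr)/\theta^2$ is strictly positive exactly when $\epsilon^2/(3L)>2$; with $\epsilon\ge 4$ and $L\le 2$ this is automatic since $\epsilon^2/(3L)\ge 16/6>2$. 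A uniform choice of sufficiently small $\theta_{m,n}\in(0,1)$ then secures all interior inequalities \eqref{eq:totalmix}, while the boundary cases $k=1$ or $l=1$ hold directly from the definitions \eqref{eq:define_lambdak}--\eqref{eq:define_gammal} thanks to $\epsilon\ge 4$. The main non-cosmetic obstacle is the first paragraph: one must verify that the extra $\sigma_3\otimes\sigma_3$ mass from the product mixture does not break the structural parallel with the Bell-state proof, and this works precisely because $p_1+p_2+p_3=1$.
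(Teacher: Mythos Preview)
Your proposal is correct and follows exactly the route the paper takes: it simply remarks that the proof of Theorem~\ref{thm:Mix} is the same as that of Theorem~\ref{thm:Maximally} once one checks $L=4p_1\sqrt{\alpha(1-\alpha)}\le 2$ and $\epsilon^2/(3L)>2$, which are precisely the two facts you verify. Your treatment of the $\sigma_3\otimes\sigma_3$ coefficient is in fact more careful than the paper's displayed decomposition of $|\Psi_\alpha\rangle\langle\Psi_\alpha|$ (whose $\sigma_3\otimes\sigma_3$ prefactor is misstated there, though the resulting inequality~\eqref{eq:totalmix} is still correct); your observation that the total collapses to $-1$ via $p_1+p_2+p_3=1$ is the right way to see it.
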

 The proof is very similar to that of Theorem \ref{thm:Maximally} once one notice that $L\leq 2$ and $\epsilon^2/3L>2.$ So we omit the proof here.

\begin{corollary} 
	For any positive integers $m,n$, if the first pair of Alices and Bobs in an  $(m,n)$-sequential scenario  share a mix entangled state   given by Eq. \eqref{eq:mixinitial}, then there exist some measurement strategies such that 
	all pairs $(A_k,B_l)$ ($1\leq k\leq m, 1\leq l\leq n$) could witness the entanglement sequentially. 
	
\end{corollary}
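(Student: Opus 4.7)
The plan is to obtain this corollary as a direct application of Theorem~\ref{thm:Mix}; all the substantive work has already been done there, and only the hypotheses need to be checked. First I would verify that, for any initial state of the form \eqref{eq:mixinitial}, the parameter $L=4p_{1}\sqrt{\alpha(1-\alpha)}$ used in the recursions \eqref{eq:define_lambdak}--\eqref{eq:define_gammal} satisfies $L\le 2$, since $p_{1}\in(0,1]$ and $\sqrt{\alpha(1-\alpha)}\le \frac{1}{2}$ with equality at $\alpha=\frac{1}{2}$. One should also note that $L>0$ requires $\alpha\in(0,1)$ strictly; otherwise $|\Psi_{\alpha}\rangle$ is a product state, the mixture \eqref{eq:mixinitial} is separable, and the claim is vacuous.

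Second, I would fix any $\epsilon\ge 4$ (taking $\epsilon=4$ is convenient), which automatically gives $\epsilon^{2}/(3L)>2$, the key inequality singled out in the remark following Theorem~\ref{thm:Mix}. Applying Theorem~\ref{thm:Mix} then produces a $\theta_{m,n}\in(0,1)$ such that the finite sequences $\lambda_{k}=\lambda_{k}(\theta_{m,n})$ for $k=1,\dots,m$ and $\gamma_{l}=\gamma_{l}(\theta_{m,n})$ for $l=1,\dots,n$, defined by \eqref{eq:define_lambdak}--\eqref{eq:define_gammal}, satisfy the detection inequality \eqref{eq:totalmix} for every pair $(k,l)$ in the range $1\le k\le m$, $1\le l\le n$.

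The concluding step is to spell out the explicit measurement strategy: each Alice $A_{k}$ performs the three POVMs $\{\sigma_{i}^{(0)}(\lambda_{k}^{(i)}),\sigma_{i}^{(1)}(\lambda_{k}^{(i)})\}_{i=1}^{3}$ with equal probability, and each Bob $B_{l}$ performs $\{\sigma_{j}^{(0)}(\gamma_{l}^{(j)}),\sigma_{j}^{(1)}(\gamma_{l}^{(j)})\}_{j=1}^{3}$ with equal probability, the sharpness parameters being fixed by convention \eqref{eq:lambda} with the $\lambda_{k}$ and $\gamma_{l}$ just chosen. Since \eqref{eq:totalmix} is precisely $\langle W_{k,l}\rangle_{\rho_{A_{k}B_{l}}}<0$, the witness $W_{k,l}$ detects the entanglement of $\rho_{A_{k}B_{l}}$ for every pair. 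I do not expect any genuine obstacle, as the sequence construction, the small-$\theta$ limit analysis, and the existence of a suitable $\theta_{m,n}$ have all been supplied by Theorem~\ref{thm:Mix}; the only minor care point is the degenerate regime $\alpha\in\{0,1\}$, which is excluded automatically by the requirement that the initial state be entangled.
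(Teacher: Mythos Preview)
Your proposal is correct and matches the paper's approach: the corollary is stated immediately after Theorem~\ref{thm:Mix} with no separate proof, being understood as a direct consequence once one observes (as the paper remarks just before) that $L=4p_{1}\sqrt{\alpha(1-\alpha)}\le 2$ and $\epsilon^{2}/(3L)>2$. Your write-up is in fact more careful than the paper's, since you explicitly check these hypotheses and flag the degenerate case $\alpha\in\{0,1\}$.
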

\vskip 20pt

\section{Conclusion}\label{sect:conclusion}
In this study, we investigated the capability of sequential and independent observers to witness entanglement. It was shown that an arbitrary number of pairs of Alices and Bobs $(A_k,B_k)$ could witness entanglement sequentially.  We first shown that there are some cases that   $(A_k,B_l)$ could not witness the entanglement even that $(A_K,B_K)$ ($K\geq \max\{k,l\}$) could witness. Furthermore, we demonstrated the feasibility of a measurement strategy that enables all possible pairs $(A_k, B_l)$ of arbitrarily many Alices and Bobs to observe entanglement in both pure entangled states and a specific class of mixed entangled states. Although it has been conjectured that at most a single pair of Alices and Bobs can detect Bell-nonlocal correlations in the context of CHSH Bell nonlocality~\cite{Cheng21, Zhu22}, our work shows that quantum entanglement is comparatively easier to observe than Bell nonlocality.
 
It has been discovered that an arbitrary number of triples consisting of Alices, Bobs, and Charlies, denoted as $(A_k, B_k, C_k)$, can sequentially witness the entanglement of Greenberger-Horne-Zeilinger (GHZ) states \cite{Srivastava22b}. This finding has also been extended to multipartite systems \cite{Srivastava22c}. Consequently, an interesting question arises: Is it possible for all triples $(A_k, B_l, C_m)$, comprising arbitrarily many Alices, Bobs, and Charlies, to sequentially witness entanglement? Additionally, exploring similar problems concerning multipartite systems adds further intrigue to the investigation.

\acknowledgements

This work is supported by National Natural Science Foundation of China  with Grant No. 12371458, the Guangdong Basic and Applied Basic Research Foundation under Grants No. 2023A1515012074, and the Science and
Technology Planning Project of Guangzhou under Grants No. 2023A04J1296.

\onecolumngrid
\appendix

\renewcommand{\theequation}{A\arabic{equation}}
 
\setcounter{equation}{0}
\setcounter{table}{0}

 \section{Proof of Lemma \ref{lemma:sequence_property}}\label{appendix:A}
  If  $\lambda_1(\theta)=\gamma_1(\theta)=\theta,$  then  the sequences  $\big(\lambda_k(\theta)\big)_{k\in \mathbb{N}}$ 
 and$\big(\gamma_l(\theta)\big)_{l\in \mathbb{N}}$ satisfy the same recursive relation with the same initial condition. Therefore the two sequences must be the same.
One finds  $$\lambda_2(\theta)/\lambda_1(\theta) \geq  \epsilon \frac{1-\frac{1+2(1-\frac{\theta^2}{2})}{3}}{2\theta^2 \frac{1+\sqrt{1-\theta^2}}{3}}=\frac{ \epsilon}{2(1+\sqrt{1-\theta^2})}>\frac{ \epsilon}{4}\geq 1.$$ Now for $k\geq 2$ and  $\lambda_k(\theta) \in (0,1)$,  
\begin{equation}
    \frac{\lambda_{k+1}(\theta)}{\lambda_{k}(\theta)}=\frac{3}{(1+\Lambda_k(\theta))}\frac{1-\displaystyle\prod_{i=1}^{k}\frac{\left(1+2\Lambda_{i}(\theta)\right)}{3}}{1-\displaystyle\prod_{i=1}^{k-1}\frac{\left(1+2\Lambda_{i}(\theta)\right)}{3}} > \frac{3}{(1+\Lambda_k(\theta))}>1,
\end{equation}
since $\Lambda_k(\theta)=\sqrt{1-\lambda^2_k(\theta)} \in (0,1)$.  So the   sequence $\big(\lambda_k(\theta)\big)_{k\in \mathbb{N}}$ is     increasing.  The next important observation about this sequence is that $\lim_{\theta\rightarrow 0+} \lambda _k(\theta)/\theta $ exists for all $k \in \mathbb{N}$.  Now we prove the statement by induction on $k$. For $k=1$, this  statement holds as $\lambda_1(\theta)/\theta\equiv 1$ by assumption. Now we suppose that $\lim_{\theta\rightarrow 0+} \frac{\lambda_k(\theta)}{\theta}=a_k\geq 1$ for $1\leq k\leq n.$ In the following, we will show 
$$\lim_{\theta\rightarrow 0+}  \frac{1-\displaystyle\prod_{k=1}^{n }\frac{\left(1+2\Lambda_{k}(\theta)\right)}{3} }
{ \theta^2}= \frac{1}{3}\sum_{k=1}^n a_k^2.$$
With this, we could show
\begin{equation}\label{eq:limitn1}
\lim_{\theta\rightarrow 0+}   \frac{\lambda_{n+1}(\theta)}{\theta} =\lim_{\theta\rightarrow 0+}\epsilon  \frac{1-\displaystyle\prod_{k=1}^{n }\frac{\left(1+2\Lambda_{k}(\theta)\right)}{3} }
{L\theta^2\displaystyle\prod_{k=1}^{n }\frac{\left(1+\Lambda_{k}(\theta)\right)}{3} }=\frac{\epsilon}{L}(\frac{3}{2})^n \lim_{\theta\rightarrow 0+}  \frac{1-\displaystyle\prod_{k=1}^{n }\frac{\left(1+2\Lambda_{k}(\theta)\right)}{3} }
{ \theta^2}=\frac{\epsilon}{3L}(\frac{3}{2})^n\sum_{k=1}^n a_k^2.
\end{equation}
Note that  $\lim_{\theta\rightarrow 0+} \Lambda_k(\theta)=\lim_{\theta\rightarrow 0+} \sqrt{1-\lambda_k(\theta)^2}=1.$ Moreover,  by the Talor expansion $\sqrt{1-x}=1-\frac{x}{2} +o(x)$, we have 
$$\lim_{\theta\rightarrow 0+} \frac{\Lambda_k(\theta)-1}{\theta^2}=\lim_{\theta\rightarrow 0+}   \frac{1-\frac{\lambda_k(\theta)^2}{2}+o(\lambda_k(\theta)^2)-1}{\theta^2}=-\frac{a^2_k}{2}.$$ Define $F_k(\theta)=\frac{2}{3}(\Lambda_k(\theta)-1).$ Then $\frac{ 1+2\Lambda_{k}(\theta) }{3}=1+F_k(\theta).$  So  $$\displaystyle\prod_{k=1}^{n }\frac{\left(1+2\Lambda_{k}(\theta)\right)}{3}=\displaystyle\prod_{k=1}^{n } (1+F_k(\theta))=1+\sum_{k=1}^n F_k(\theta)+\sum_{1\leq k<l\leq n}  F_{k}(\theta)F_l(\theta)+\cdots+F_{1}(\theta)\cdots F_n(\theta).$$ If we denote $H_n(\theta)=\sum_{1\leq k<l\leq n}  F_{k}(\theta)F_l(\theta)+\cdots+F_{1}(\theta)\cdots F_n(\theta)$, clearly, $\lim_{\theta\rightarrow 0+} \frac{H_n(\theta)}{\theta^2}=0.$ Therefore, we have 
$$\lim_{\theta\rightarrow 0+}  \frac{1-\displaystyle\prod_{k=1}^{n }\frac{\left(1+2\Lambda_{k}(\theta)\right)}{3} }
{ \theta^2}=\lim_{\theta\rightarrow 0+}  \frac{1-\displaystyle\prod_{k=1}^{n }(1+F_k(\theta)) }
{ \theta^2}=-\lim_{\theta\rightarrow 0+} \left(\sum_{k=1}^n \frac{F_k(\theta)}{\theta^2}+ \frac{H_n(\theta)}{\theta^2} \right)=\frac{1}{3}\sum_{k=1}^n a_k^2.$$


\end{document}